\newcommand{\be}{\begin{equation}}
\newcommand{\ee}{\end{equation}}
\newcommand{\bra}[1]{\left\langle #1 \right|}
\newcommand{\ket}[1]{\left|#1\right\rangle}
\newcommand{\Tr}{\textrm{Tr}}
\newcommand{\overbar}[1]{\mkern 1.5mu\overline{\mkern-1.5mu#1\mkern-1.5mu}\mkern 1.5mu}
\newtheorem{lem}{Lemma}
\newtheorem{thm}{Theorem}
\newtheorem{defi}{Definition}
\begin{document}
\title{Local integrals of motion and the logarithmic lightcone in many-body localized systems}

\author{Isaac H. Kim}
\affiliation{Perimeter Institute for Theoretical Physics, Waterloo ON N2L 2Y5, Canada}
\author{Anushya Chandran}
\affiliation{Perimeter Institute for Theoretical Physics, Waterloo ON N2L 2Y5, Canada}
\author{Dmitry A. Abanin}
\affiliation{Perimeter Institute for Theoretical Physics, Waterloo ON N2L 2Y5, Canada}
\affiliation{Institute for Quantum Computing, University of Waterloo, Waterloo ON N2L 3G1, Canada}

\date{\today}
\begin{abstract}
We propose to define full many-body localization in terms of the recently introduced integrals of motion[Chandran et al., arXiv:1407.8480], which characterize the time-averaged response of the system to a local perturbation. The quasi-locality of such integrals of motion implies an effective lightcone that grows logarithmically in time. This subsequently implies that (i) the average entanglement entropy can grow at most logarithmically in time for a global quench from a product state, and (ii) with high probability, the time evolution of a local operator for a time interval $|t|$ can be classically simulated with a resource that scales polynomially in $|t|$.

\end{abstract}

\maketitle

\section{Introduction}
One of the most striking phenomena that occurs in disordered quantum non-interacting systems is Anderson localization, a purely quantum-mechanical mechanism whereby a single particle is localized indefinitely in some region in space.\cite{Anderson1958} There has been a recent surge of interest\cite{Gornyi2005,Basko2006,Prosen2008,Pal2010,Bardarson2012,Vosk2013,Serbyn2013-1,Serbyn2013,Huse2013,Huse2013-1,Bauer2013,Pekker2014,Kjall2014} in studying the effects of interaction in systems that exhibit Anderson localization. Research in this direction was largely inspired by Refs.~[\onlinecite{Gornyi2005,Basko2006}], in which it was  perturbatively argued that localization can survive at finite energy density. In particular, localization property of infinite-temperature states in disordered spin chains was studied by Pal and Huse\cite{Pal2010}, who numerically observed a transition of the energy level statistics between the localized and the delocalized phase. Other numerical studies\cite{Prosen2008,Bardarson2012} found that the localized phase exhibits a logarithmic growth of entanglement entropy under a quantum quench, which is different from the linear growth that is observed in delocalized systems.\cite{Calabrese2005,Calabrese2006,Kim2013a} These properties are strongly believed to be universal characteristics of many-body localized (MBL) systems.

To explain these observations, in particular, the logarithmic spreading of entanglement, it was proposed that MBL phase is characterized by an extensive set of emergent local integrals of motion.\cite{Serbyn2013,Huse2013} The existence of the complete set of integrals of motion introduced in Ref.[\onlinecite{Serbyn2013,Huse2013}] directly implies the numerically observed properties of the MBL phase, but a systematic method for finding such integrals of motion remains to be found.

Recently, we introduced a systematic method for finding an alternative set of local integrals of motion in the MBL phase\cite{Chandran2014} (see also Ref.[\onlinecite{Ros2014}] for a related work). The integrals of motion of Ref.[\onlinecite{Chandran2014}] can be constructed systematically, and have a clear physical meaning, describing the long-time response of the MBL system to a local perturbation. However, it is currently not known as to whether the properties of the MBL phase follow from the locality of these integrals of motion.

The current state of affairs raises a natural question. Namely, does the locality of the integrals of motion defined in Ref.[\onlinecite{Chandran2014}] directly imply the numerically observed properties of the MBL phase? Moreover, can the locality of these integrals of motion be viewed as a defining property of the MBL phase? Here we answer these questions in the affirmative. Under a mild locality condition, we prove (i) a statement that suggests an absence of transport and (ii) the slow growth of average entanglement entropy under a quantum quench. We also prove a fact that may be relevant in numerically simulating these systems: that the long-time dynamics of the majority of the disorder realizations can be efficiently simulated by a classical computer.

The key result that underlies these properties is a variant of the Lieb-Robinson bound that is tailor-made to incorporate the locality of the integrals of motion (Theorem 1 below). Lieb-Robinson bound states that there exists an effective speed of light in an interacting quantum many-body system with local interactions, such that correlations decay exponentially outside the effective light cone.\cite{LR1972} It was first shown by Burrell and Osborne that a much stronger bound can be achieved for localized noninteracting systems,\cite{Burrell2007} and the bound was subsequently tightened by Hamza et al.\cite{Hamza2011} We prove a variant of these bounds under the locality condition imposed on the integrals of motion. The key feature of our bound is that it exhibits an effective logarithmic lightcone, and as such, the effective speed of light asymptotically approaches to $0$ in the infinite time limit. Indeed, this clearly demonstrates that the integrals of motion defined in Ref.[\onlinecite{Chandran2014}] correctly capture the universal signatures of the MBL phase. We therefore propose that the exponential decay of such integrals of motion can be adopted as a definition of many-body localization.

Another important issue that we address in this paper is whether the integrals of motion are indeed local. While proving such a statement remains as an open problem, there are certain encouraging facts that support the prospect of our approach. Most importantly, the locality condition we impose on the integrals of motion (Definition 1) is based on its \emph{average} decay property. Working with averaged quantities brings certain advantages compared to an approach based on probability estimates. Due to the existence of rare resonant regions that occur in the MBL phase,\cite{Bauer2013} approaches based on probabilistic estimates must account for the density of such regions. While this is by no means impossible,\cite{Imbrie2014} there is little doubt that the existence of such regions poses a great technical challenge in rigorously proving the existence of the MBL phase. In contrast, our approach deliberately sidesteps this issue by simply focusing on the average behavior. It should be also noted that the strength of the integrals of motion are indeed numerically observed to decay exponentially on average.

The rest of the paper can be largely divided into two parts. In the first part, we formulate a precise definition of the locality of the integrals of motion, and argue its validity. The second part is devoted to the implication of this definition, beginning with the proof of the Lieb-Robinson bound that exhibits the logarithmic lightcone. The integrals of motion and the precise sense in which they are local is defined in Section \ref{section:IM}.  In Section \ref{section:ZVLR}, we prove the aforementioned variant of the Lieb-Robinson bound. In Section \ref{section:Implications}, we prove the immediate consequences of the bound.

\section{Integrals of motion and localization length\label{section:IM}}

Following Ref.[\onlinecite{Chandran2014}], we introduce a set of integrals of motion that can be defined on any locally interacting Hamiltonian. Without loss of generality, we assume the Hamiltonian has the following form:
\begin{equation}
H=\sum_{j=1}^N h_j, \nonumber
\end{equation}
where $h_j$ is a local term that has a bounded support and strength, and $N$ is the total number of particles, e.g., spins or fermions. We use the infinite-time average of $e^{iHt}h_je^{-iHt}$ as the integrals of motion, which we denote as $\tilde{h}_j$.
\begin{equation}\label{eq:hj}
\tilde{h}_j= \lim_{T \to \infty} \frac{\int^{T}_{-T} e^{iHt} h_je^{-iHt} dt}{2T}
\end{equation}
We note that the infinite-time average of {\it any} local operator is an integral of motion, but for our purposes we focus on the particular integrals of motion (\ref{eq:hj}).

There are two important properties of $\tilde{h}_j$ that will be exploited extensively later. First, $H$ can be expressed as a sum of $\tilde{h}_j$:
\begin{equation}
H=\sum_{j=1}^{N} \tilde{h}_j.
\end{equation}
 Second, $\tilde{h}_j$ commutes with each other:
\begin{equation}
[\tilde{h}_i, \tilde{h}_j]=0.
\end{equation}
These two facts follow directly from the definition of $\tilde{h}_j$, and as such, they form a set of integrals of motion for any quantum many-body systems.

In Ref.[\onlinecite{Chandran2014}], it was observed that the strength of $\tilde{h}_j$ decays exponentially on average. In particular, the inverse of their average spatial decay rate was related to the localization length. This motivates a formal definition of what it means for a system to be localized, which we describe below; we use a notational convention that $\| O \|$ denotes the operator norm of $O$, i.e., the largest modulus of the eigenvalues of $O$. We denote the average over the distribution $\mu$ as $\mathbb{E}_{ \mu}$.
\begin{defi}
A family of Hamiltonians $H_{\mu}=\sum_j h_j(\mu)$ over a disorder distribution $ \mu$ is localized at a lengthscale $\xi$ if
\begin{equation}
\mathbb{E}_{\mu}\|[\tilde{h}_j(\mu), O]\| \leq e^{-\frac{x}{\xi}} \|O \| ,
\end{equation}
for all $j$ and $O$, where the support of $O$ is distance $x$ away from $j$.
\end{defi}
We would like to caution the readers by emphasizing that $\xi$ is \emph{not} necessarily equal to the localization length. This is due to a simple reason that, if a system is localized at a lengthscale $\xi$, it is localized at a lengthscale $\xi'$ as well, where $\xi'>\xi$. On this ground, $\xi$ can be thought as an upper bound on the localization length.

\subsection{Noninteracting case}
In the preceding definition, $h_j(\mu)$ is a geometrically local Hamiltonian which is randomly sampled from a probability distribution $\mu$. As an example, consider a model describing a noninteracting Anderson insulator in one spatial dimension:
\begin{equation}
H_{ \mu} = \sum_{j=1}^{N}-t(a^{\dagger}_j a_{j+1} + a^{\dagger}_{j+1} a_j) +  W_jn_j, n_j=a^{\dagger}_ja_j \label{eq:AndersonModel}
\end{equation}
where $W_j$ is uniformly distributed in $[-W,W]$. In this case, one possible choice of $h_j(\mu)$ would be $-t(a^{\dagger}_j a_{j+1} + a^{\dagger}_{j+1} a_j) + W_j n_j$, where $W_j$ is a random variable sampled from the $j$th component of a multivariate i.i.d. distribution $\mu=(W_1,W_2, \cdots)$. We note in passing that there might be other legitimate choices of $\tilde{h}_j(\mu)$, although we do not expect the rest of this paper to depend on such details.

We have studied the decay property of the integrals of motion for the above model, by exploiting the fact that the system is noninteracting. This implies that $\tilde{h}_j$ can be written as a sum of terms that are quadratic in terms of the creation and the annihilation operators. It is convenient to write down the Hamiltonian in terms of the Majorana operators($c_n$), which are related to the creation($a_n^{\dagger}$) and annihilation($a_n$) operators as follows:
\begin{equation}
c_{2n-1} =a_n^{\dagger} + a_n \quad c_{2j}=\frac{a_n-a_n^{\dagger}}{i}. \nonumber
\end{equation}
Without loss of generality, any noninteracting Hamiltonian can be written as follows:
\begin{equation}
\tilde{h}_j = \sum_{n,m} \frac{i\tilde{h}_{nm}^j}{2} c_n c_m. \nonumber
\end{equation}
The commutator between $\tilde{h}_j$ and an operator $O$ can be bounded by the triangle inequality:
\begin{equation}
\|[\tilde{h}_j, O] \| \leq \|[\tilde{h}_j(x),O] \| + \|[\tilde{h}_j - \tilde{h}_j(x), O] \|,
\end{equation}
where
\begin{equation}
\tilde{h}_j(x) = \sum_{|n-j|, |m-j| < x} \frac{i\tilde{h}_{nm}^j}{2} c_n c_m. \nonumber
\end{equation}
It should be noted that $[\tilde{h}_j(x),O]=0$, since the support of $\tilde{h}_j(x)$ does not overlap with the support of $O$. Now we have an upper bound on $\|[\tilde{h}_j, O] \|$:
\begin{equation}
\|[\tilde{h}_j, O] \|\leq 2\eta(x) \| O \|, \label{eq:eta_definition}
\end{equation}
where
\begin{equation}
\eta_1(x)=\mathbb{E}_{\mu}\sum_{\min(|n-j|,|m-j|) \geq x } |h_{nm}|.
\end{equation}
Since $\eta_1(x)$ is independent of $O$, it serves as a useful figure of merit for assessing whether the given system is localized or not. If $\eta_1(x)$ decays exponentially in $x$, Definition 1 implies that the system is localized at some finite lengthscale. This is indeed what we observe in FIG.\ref{fig:IoMDecay_noninteracting}. The plot suggests that Definition 1 clears the minimal requirement: that it is applicable to a noninteracting localized system.
\begin{figure}[h]
\includegraphics[width=3.3in]{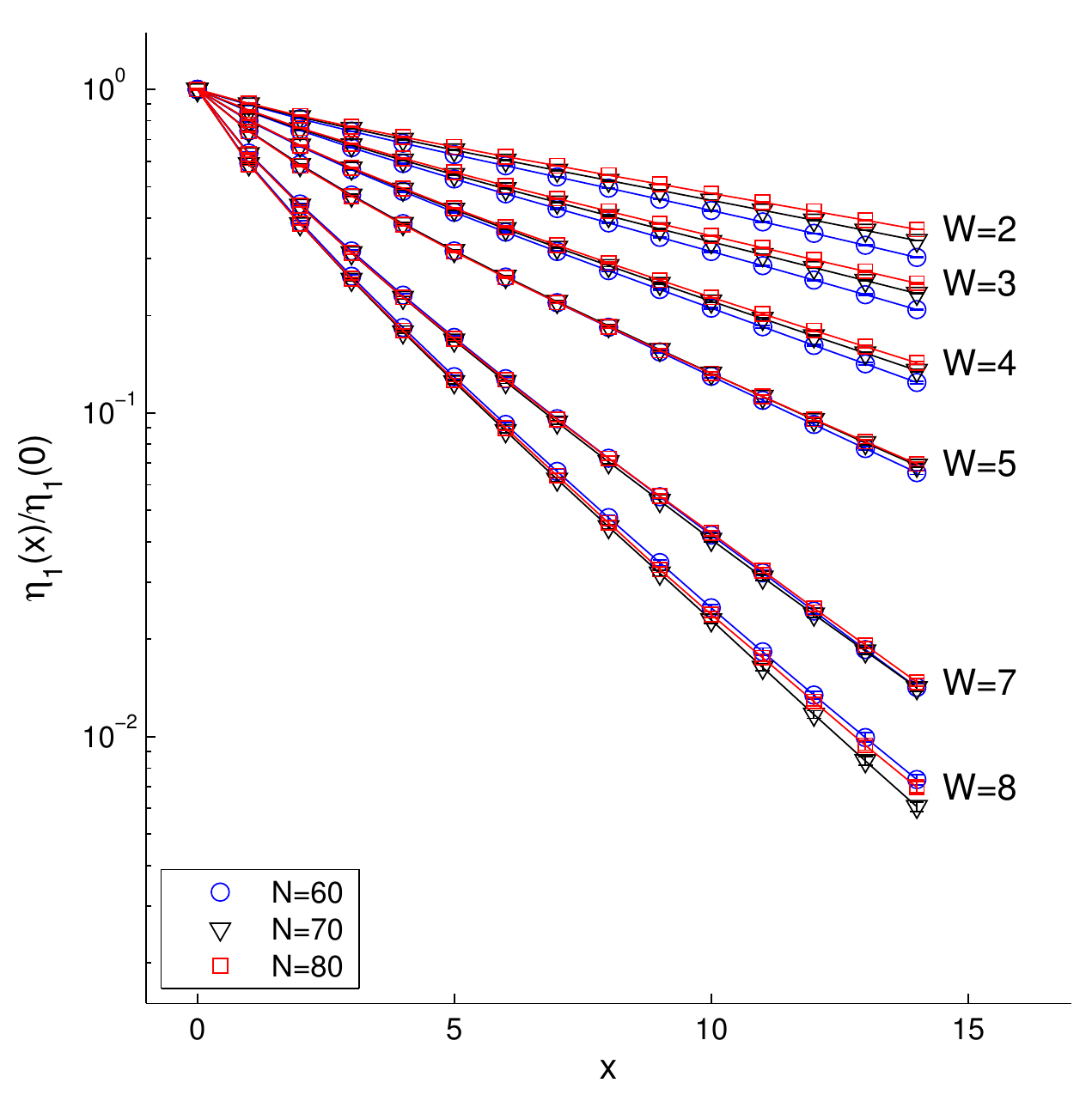}
\caption{Average decay property of the integrals of motion for noninteracting systems(Eq.\ref{eq:AndersonModel}). The system consists of $N=60,70,$ and $80$ noninteracting spinless fermions. The blue, red, and the black plots correspond to $N=60,70,$ and $80$. The average is taken over $1\times 10^4$ disorder realizations. We studied the decay property of the integrals of motion $\tilde{h}_{30}$. The variation of $\eta_1(x)$ between different system sizes were within $\sim10\%$ of their average values. The decay profiles are consistent with the exponential decay. Due to the small localization length in weakly disordered systems($W=2,3,4$), different system sizes lead to slightly different decay rate. However, this inconsistencies quickly vanish as the disorder strength is increased.  \label{fig:IoMDecay_noninteracting}}
\end{figure}

\subsection{Interacting case}
Definition 1 is well-defined for any locally interacting quantum many-body system, and as such, it is applicable even if interaction is added to Eq.(\ref{eq:AndersonModel}). Indeed, we apply it to the XXZ spin chain with nearest-neighbor interactions and random magnetic field in the $z$ direction, which can be mapped to such a model under Jordan-Wigner transformation. The Hamiltonian in the spin basis is as follows:
\begin{equation}
H=J_x\sum_{j=1}^{N}(\sigma_j^x \sigma_{j+1}^{x}+\sigma_j^y \sigma_{j+1}^y) + J_z\sum_{j=1}^{N} \sigma_j^z \sigma_{j+1}^z + \sum_{j=1}^{N} W_j \sigma_j^z,\label{eq:interacting_Hamiltonian}
\end{equation}
where $W_j$ is randomly distributed in an interval $[-W,W]$. For numerical reasons, here we use a different yet related norm as a figure of merit: the Frobenius norm. Recall that the Frobenius norm of an operator $A$ is defined as $\Tr(A^{\dagger}A)$. It is straightforward to verify
\begin{equation}
\| A \|^2 \leq \Tr(A^{\dagger} A).
\end{equation}
Therefore, if the Frobenius norm decays sufficiently fast, it would imply that the operator norm of $\| A \|$ decays fast as well. As we did in the noninteracting case, one can conservatively estimate the commutator between $\tilde{h}_j$ and $O$ by approximating $\tilde{h}_j$ by an operator with a restricted support, which is again denoted as $\tilde{h}_j(x)$. Without loss of generality, one can express any operator as  a sum of orthonormal basis set that spans the set of operators. A canonical choice is the generalized Pauli-operators, which is of the following form:
\begin{equation}
\sigma = 2^{-\frac{N}{2}}\sigma^{i_1} \otimes \sigma^{i_2} \otimes \cdots \otimes \sigma^{i_n},
\end{equation}
where $\sigma^i$ is an element of the Pauli group and $2^{-\frac{N}{2}}$ is the suitable normalization factor to ensure $\Tr(\sigma^{\dagger} \sigma)=1$.

Analogous to our choice for the noninteracting case, we attempt to approximate $\tilde{h}_j$ by the following operator:
\begin{equation}
\tilde{h}_j(x)= \frac{1}{d_{\overline{j(x)}}}\Tr_{\overline{j(x)}} \tilde{h}_j,
\end{equation}
where $j(x)$ is a set of sites that are distance $x$ or less away from $j,$ $\overline{j(x)}$ is its complement, and $d_{\overline{j(x)}}$ is the dimension of the Hilbert space associated to $d_{\overline{j(x)}}$. Since all the elements in the Pauli group aside from the identity matrix is traceless, $\tilde{h}_j(x)$ only consists of terms that act trivially on $\bar{j}(x)$. In particular, it commutes with any operator $O$ whose support is distance $x$ or more away from $j$. The key identity is the following:
\begin{equation}
\Tr(\tilde{h}_j^2)=\Tr(\tilde{h}_{j}(x)^2) + \Tr((\tilde{h}_j - \tilde{h}_j(x))^2), \label{eq:Frobenius_Identity}
\end{equation}
which can be verified easily by expanding all the terms. This leads to the following sequence of inequalities:
\begin{align}
\mathbb{E}_{\mu}\|[\tilde{h}_j,O]\| &= \mathbb{E}_{\mu}\|[\tilde{h}_j - \tilde{h}_j(x),O]\| \nonumber \\
&\leq 2\mathbb{E}_{\mu} \| \tilde{h}_j-\tilde{h}_j(x)\| \| O\| \nonumber \\
&\leq 2\mathbb{E}_{\mu} (\Tr( \tilde{h}_j-\tilde{h}_j(x))^2)^{\frac{1}{2}} \nonumber \\
&\leq  2 \mathbb{E}_{\mu}(\Tr(\tilde{h}_j^2) - \Tr(\tilde{h}_j(x)^2))^{\frac{1}{2}}\| O \| \nonumber \\
&\leq 2(\eta_2(x))^{\frac{1}{2}} \| O \|,
\end{align}
where $\eta_2(x)=\mathbb{E}_{\mu }(\Tr(\tilde{h}_j^2) - \Tr(\tilde{h}_j(x)^2))$. From the first line to the second line, we used the sub-mulplicative property of the operator norm. From the second line to the third line, we used the well-known fact that the square of the operator norm is bounded by the Frobenius norm. From the third line to the fourth line, we used Eq.(\ref{eq:Frobenius_Identity}). In the last step, we used the concavity of the function $f(x)=x^{\frac{1}{2}}$ to complete the argument. While this bound is more crude compared Eq.(\ref{eq:eta_definition}), it is easier to deal with it numerically. If $\eta(x)$ decays exponentially, Definition 1 implies that the system is localized at some finite lengthscale. Indeed, we numerically observe a behavior that is consistent with the exponential decay at strong disorder $W\geq 4$, at which the model (\ref{eq:interacting_Hamiltonian}) is believed to be in the MBL phase; see FIG.\ref{fig:IoMDecay_interacting}.

\begin{figure}[h]
\includegraphics[width=3.3in]{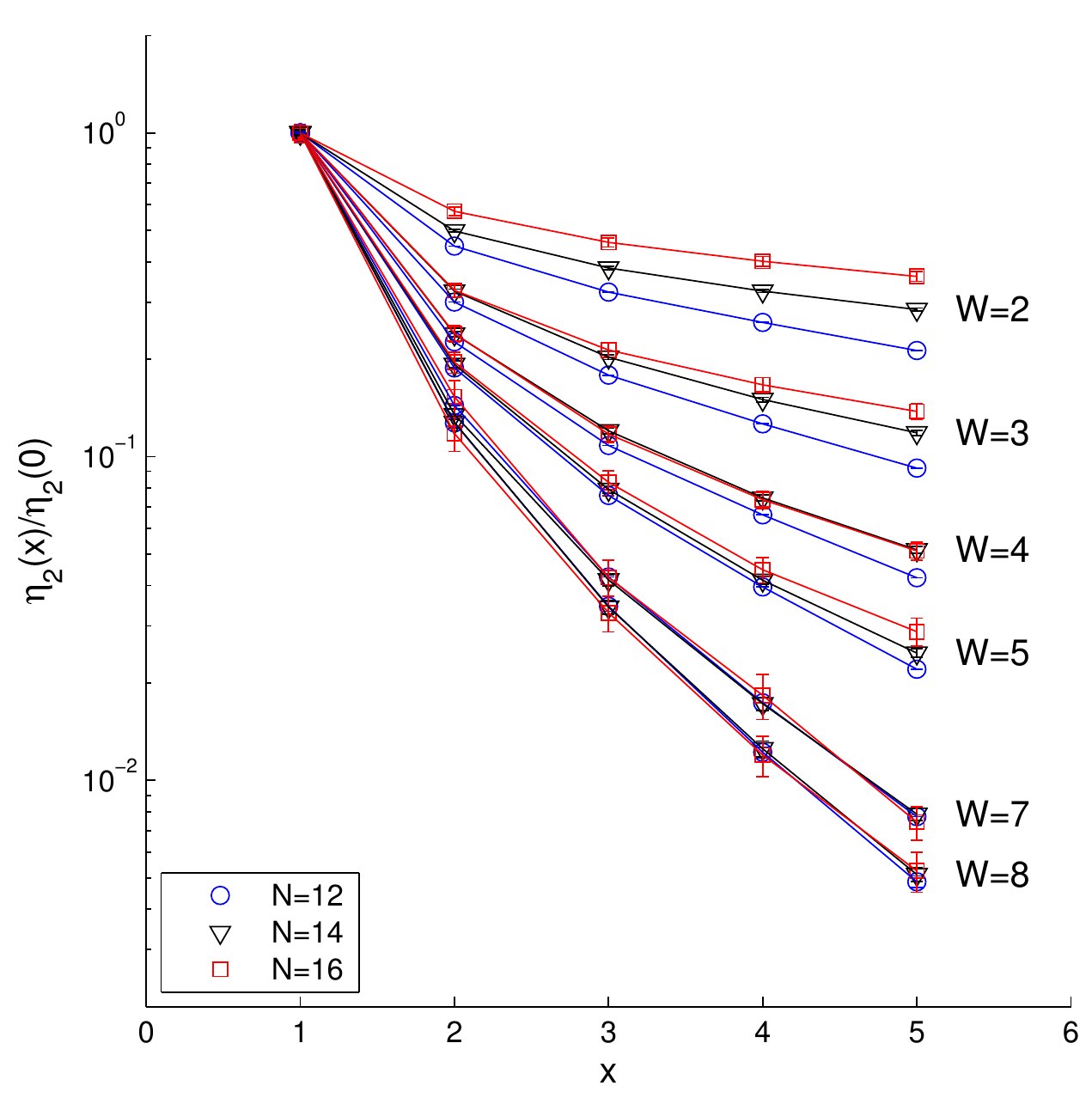}
\caption{Average decay property of the integrals of motion for interacting model(Eq.\ref{eq:interacting_Hamiltonian}). The system consists of $N=12,14,$ and $16$ spin-$\frac{1}{2}$ particles. The blue, black, and the red corresponds to the chain of length $12,14,$ and $16$. We studied the decay property of the integrals of motion $\tilde{h}_1$. The average is taken over $10^4$ disorder realizations for $N=12$ and 14, and $500$ realizations for $N=16$. For $W=4,5,7,8$, the decay rate is independent of the system size, which is indicative of the existence of finite localization length. The variation of $\eta_2(0)$ between different system sizes were within $\sim10\%$ of their average values. \label{fig:IoMDecay_interacting}}
\end{figure}

In addition to the consequences that we have briefly discussed in the introduction, our definition has a number of technically desirable properties. First, the definition does not involve any probability estimates, and as such, it is simple to state. Second, the definition does not involve the eigenstates. Any definition of localization length that involves an adiabatic connection to the noninteracting Anderson model is unlikely to be useful for a system that is topologically ordered and localized at the same time, e.g. disordered Majorana chain considered in Ref.[\onlinecite{Bravyi2011b}]. Our definition does not suffer from this problem.

\section{Logarithmic lightcone\label{section:ZVLR}}

In this Section, we study the main consequence of the exponential decay property of the integrals of motion: a variant of the Lieb-Robinson bound that exhibits a logarithmic lightcone. Our fundamental assumption is that $\tilde{h}_j$ decays exponentially on average in the MBL phase. In particular, we assume that the system is localized at a finite lengthscale $\xi$ which is independent of the system size; see Definition 1. While we are currently unable to prove this statement, FIG.\ref{fig:IoMDecay_noninteracting} and FIG.\ref{fig:IoMDecay_interacting} suggest that the assumption is valid for the models we have considered.

For a time evolution generated by a locally interacting Hamiltonian, Lieb-Robinson bound provides an upper bound on the correlation that is developed between two distant subsystems\cite{LR1972,Hastings2004a}:
\begin{equation}
\|[O_A(t), O_B]\| \leq c \min(|A|,|B|) e^{-a(x-vt)}, \label{eq:LRBound}
\end{equation}
where $O_A$ and $O_B$ are normalized operators that are supported on subsystem $A$ and $B$, $x$ is a distance between $A$ and $B$, $v$ is the Lieb-Robinson velocity, and $|A|$ is the volume of $A$.  We also employ a convention that $O_A(t)= e^{iHt}O_A e^{-iHt}$ is the operator in the Heisenberg picture.  We will also liberally use $c$ and $a$ as a numerical constant that does not depend on any of the aforementioned quantities.

Since Eq.(\ref{eq:LRBound}) is applicable to any locally interacting quantum many-body system, one cannot hope to derive the unique features of MBL from it. The main goal here is to strengthen Eq.(\ref{eq:LRBound}) by exploiting the decay property of $\tilde{h}_j$. To be more precise, we will prove a bound of the following form:
\begin{equation}
\mathbb{E}_{\mu}\| [O_A(t),O_B] \| \leq ct e^{-\frac{x}{2\xi}}. \label{eq:ZVLR}
\end{equation}

A similar bound was first proved in Ref.[\onlinecite{Burrell2007}] for disordered XY model, which can be mapped to a free-fermion model. Hamza et al. proved a much stronger bound for noninteracting systems that are localized; their bound does not have any time dependence, and they defined such bound as the zero-velocity Lieb-Robinson bound.\cite{Hamza2011} Intuitively, this bound implies that correlations do not spread at all for a spin model that can be mapped to the Anderson insulator. On the other hand, Eq.(\ref{eq:ZVLR}) allows correlations to spread, albeit very slowly. The physical meaning of Eq.(\ref{eq:ZVLR}) is clear; any system that is localized at a finite lengthscale exhibits an effective logarithmic lightcone.

For obvious reasons, the derivation of Eq.(\ref{eq:ZVLR}) is quite different from the derivation of the Lieb-Robinson bound. We do not wish to delve into these details in this paper, but we provide a side remark for the readers who are familiar with a modern derivation of the Lieb-Robinson bound.\cite{Hastings2004a} In the derivation, an upper bound of $\|[ O_A(t), O_B]\|$ is expressed in terms of $\|[h_j(t),O_B]\|$. A recursive inequality is subsequently derived from such a relation, which eventually leads to Eq.(\ref{eq:LRBound}). Our derivation differs in that we use a decomposition of $H$ into $\tilde{h}_j$. The main advantage of using $\tilde{h}_j$, as opposed to $h_j$, is that $\tilde{h}_j$ is invariant under the time evolution. This, in turn, implies that the recursion halts at the first level. While $\tilde{h}_j$ is not local, it can be well-approximated by a local operator if the system is localized at a finite lengthscale. An estimate on the approximation error can be related to the right hand side of Eq.(\ref{eq:ZVLR}). The precise technical result is stated below.
\begin{thm}
If a family of Hamiltonian over a disorder distribution $\mu$ is localized at a lengthscale of $\xi$,
\begin{equation}
\mathbb{E}_{\mu} \| [O_A(t),O_B] \| \leq ct |\partial A|e^{-\frac{x}{2\xi}},
\end{equation}
where $|\partial A|$ is the boundary area of $A$.
\end{thm}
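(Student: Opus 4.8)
The plan is to lean on the two structural properties of the $\tilde{h}_j$ --- that $H=\sum_j\tilde{h}_j$ and that they mutually commute --- so as to avoid the recursive structure of the usual Lieb--Robinson argument entirely. First note that, since $O_B(-t)\equiv e^{-iHt}O_Be^{iHt}$ satisfies $\|[O_A(t),O_B]\|=\|[O_A,O_B(-t)]\|$, it is enough to bound $\mathbb{E}_{\mu}\|[O_A,O_B(-t)]\|$. Let $S$ be the set of sites within distance $x/2$ of the support $A$ of $O_A$, and split $H=H_{\mathrm{in}}+H_{\mathrm{out}}$ with $H_{\mathrm{in}}=\sum_{j\in S}\tilde{h}_j$ and $H_{\mathrm{out}}=\sum_{j\notin S}\tilde{h}_j$; this split is purely geometric and does not depend on the disorder realization. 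Because $[\tilde{h}_i,\tilde{h}_j]=0$ we have $[H_{\mathrm{in}},H_{\mathrm{out}}]=0$, hence $e^{-iHt}=e^{-iH_{\mathrm{out}}t}e^{-iH_{\mathrm{in}}t}$ and
\begin{equation}
O_B(-t)=e^{-iH_{\mathrm{out}}t}\left(e^{-iH_{\mathrm{in}}t}O_Be^{iH_{\mathrm{in}}t}\right)e^{iH_{\mathrm{out}}t}.\nonumber
\end{equation}

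Next I would replace $O_B(-t)$ by $\widehat{O}_B(-t)\equiv e^{-iH_{\mathrm{out}}t}O_Be^{iH_{\mathrm{out}}t}$, i.e.\ drop the inner conjugation by $H_{\mathrm{in}}$. Integrating the Heisenberg equation once gives $\|e^{-iH_{\mathrm{in}}t}O_Be^{iH_{\mathrm{in}}t}-O_B\|\le|t|\,\|[H_{\mathrm{in}},O_B]\|$, and conjugation by the unitary $e^{-iH_{\mathrm{out}}t}$ preserves the norm, so $\|O_B(-t)-\widehat{O}_B(-t)\|\le|t|\,\|[H_{\mathrm{in}},O_B]\|$. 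For the remaining term, conjugating out the $e^{\pm iH_{\mathrm{out}}t}$ gives $\|[O_A,\widehat{O}_B(-t)]\|=\|[e^{iH_{\mathrm{out}}t}O_Ae^{-iH_{\mathrm{out}}t},O_B]\|$; writing $e^{iH_{\mathrm{out}}t}O_Ae^{-iH_{\mathrm{out}}t}=O_A+R$ with $\|R\|\le|t|\,\|[H_{\mathrm{out}},O_A]\|$ and using $[O_A,O_B]=0$ (disjoint supports, since $x>0$) yields $\|[O_A,\widehat{O}_B(-t)]\|\le 2|t|\,\|O_B\|\,\|[H_{\mathrm{out}},O_A]\|$. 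Altogether,
\begin{equation}
\|[O_A(t),O_B]\|\le 2|t|\,\|O_B\|\,\|[H_{\mathrm{out}},O_A]\|+2|t|\,\|O_A\|\,\|[H_{\mathrm{in}},O_B]\|.\nonumber
\end{equation}
The point is that commutativity made $[H_{\mathrm{out}},e^{iH_{\mathrm{out}}t}O_Ae^{-iH_{\mathrm{out}}t}]=[H_{\mathrm{out}},O_A]$, rather than something that itself spreads, so the recursion halts after this single step.

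Finally, I would take the disorder average, and bound each commutator by subadditivity $\|[\sum_k\tilde{h}_k,O]\|\le\sum_k\|[\tilde{h}_k,O]\|$ together with Definition~1 applied termwise. Every $j\notin S$ has $\mathrm{dist}(j,A)>x/2$, so $\mathbb{E}_{\mu}\|[H_{\mathrm{out}},O_A]\|\le\|O_A\|\sum_{j:\,\mathrm{dist}(j,A)>x/2}e^{-\mathrm{dist}(j,A)/\xi}$; grouping the sites into shells at fixed distance $r>x/2$ from $A$, each containing $O(|\partial A|)$ sites, and summing the geometric series bounds this by $c\,|\partial A|\,e^{-x/(2\xi)}\|O_A\|$. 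Likewise, every $j\in S$ has $\mathrm{dist}(j,B)\ge x-x/2=x/2$, and since these sites all lie within $x/2$ of $A$ there are again only $O(|\partial A|)$ of them at each distance from $B$, so $\mathbb{E}_{\mu}\|[H_{\mathrm{in}},O_B]\|\le c\,|\partial A|\,e^{-x/(2\xi)}\|O_B\|$. With $O_A,O_B$ normalized this gives $\mathbb{E}_{\mu}\|[O_A(t),O_B]\|\le c\,|t|\,|\partial A|\,e^{-x/(2\xi)}$, as claimed. The main obstacle I expect is precisely this last geometric bookkeeping: one must verify that only $O(|\partial A|)$ integrals of motion contribute at each distance (in $D>1$ this carries a polynomial-in-$x$ prefactor that must be absorbed into the constant), and one must notice that the cut was placed at the midpoint deliberately --- a cut at $\lambda x$ produces errors scaling like $e^{-\lambda x/\xi}$ and $e^{-(1-\lambda)x/\xi}$, whose sum is minimized, and equals $2e^{-x/(2\xi)}$, exactly at $\lambda=1/2$.
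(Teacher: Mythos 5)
Your argument is correct, and it lands on exactly the same intermediate inequality as the paper's Lemma~2, namely
\begin{equation}
\|[O_A(t),O_B]\|\le 2t\left(\|O_B\|\,\|[O_A,\tilde{H}_{A(r)^c}]\|+\|O_A\|\,\|[O_B,\tilde{H}_{A(r)}]\|\right)\nonumber
\end{equation}
with the cut at $r=x/2$, followed by the identical endgame of disorder-averaging, applying Definition~1 termwise, and summing over shells. The route to that inequality is genuinely different, though. The paper runs a Duhamel-type argument: it sets $f(t)=[O_A(t),O_B]$, derives $f'(t)=-i[f(t),\tilde{H}_{A(r)}]+\delta(t)$ (using the conservation $\tilde{H}_{A(r)}(t)=\tilde{H}_{A(r)}$ and the Jacobi identity), discards the norm-preserving term, and integrates $\|\delta\|$. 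You instead exploit $[\tilde h_i,\tilde h_j]=0$ more directly to factorize the propagator exactly, $e^{-iHt}=e^{-iH_{\mathrm{out}}t}e^{-iH_{\mathrm{in}}t}$, and then make two elementary first-order estimates; this is arguably more transparent and makes vivid why the recursion of the standard Lieb--Robinson proof ``halts at the first level,'' but it leans on exact commutativity of the two blocks, whereas the paper's differential-equation version only needs each block to be conserved and would degrade gracefully if the decomposition were only approximately commuting. Two small points on your final step: the polynomial-in-$x$ shell-volume prefactor in $D>1$ cannot literally be absorbed into the constant $c$ --- it must be absorbed by slightly weakening the exponent (the paper glosses over the same bookkeeping, and in $D=1$ there is no issue) --- and your observation that $\lambda=1/2$ optimizes the cut is correct and is precisely why the paper chooses $r=x/2$.
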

The derivation shall be presented in the Appendix.

\section{Implications\label{section:Implications}}
In this Section, we study some of the implications of Theorem 1. Broadly speaking, we exploit the fact that the right hand side of Eq.(\ref{eq:ZVLR}) becomes $O(1)$ only after a time that scales exponentially in $x$. This fact can be straightforwardly used to show a strong suppression of transport. It is also an important ingredient behind the other results, such as the efficient simulability of the dynamics and the slow growth of average entanglement entropy under a quantum quench. While each of these results are written in a modular fashion, we believe it will be instructive to read them sequentially; we have deliberately coordinated these results with a sequence of increasing level of technical difficulties, so that a reader unfamiliar with such derivation can gradually get accustomed to the relevant techniques.

\subsection{Absence of transport}
As a first consequence, we give a strong yet indirect evidence that the system has a vanishing conductivity, independent of the choice of the conserved quantity. Intuitively, the observation that we are about to present here suggests that any particles must exhibit a subdiffusive behavior. In order to explain this phenomenon, imagine the following thought experiment. Prepare a system, say a spin chain, in one of its energy eigenstates. Perform a local physical operation, such as flipping a spin located at some site. Suppose we attempt to measure the distrubance that is caused by the physical operation by performing a measurement on a distant spin. In a delocalized regime, we expect to be able to detect this disturbance in a time that grows only polynomially in the distance. For example, if the spin chain shows a ballistic behavior, there must be a physical observable that detects such a disturbance in a time that is proportional to the distance between the two locations. We show that, for a MBL system, the time scales at least exponentially. Remarkably this conclusion is independent of the choice of the eigenstates, the physical operation, or the quantity which is being measured.

This result can be readily derived from Theorem 1, as we explain below. Without loss of generality, let us denote $\ket{\psi}$ as the eigenstate, $U_A$ as the unitary operator that represents the physical operation restricted on $A$, and $O_B$ as the observable that lies on $B$. The main quantity of interest is the time-dependent correlation function $\langle O_B(t)\rangle = \bra{\psi} U_A^{\dagger} e^{-iHt} O_B e^{iHt}U_A\ket{\psi}$. Using the fact that $\ket{\psi}$ is the eigenstate of the Hamiltonian, one can obtain the following identity:
\begin{equation}
\langle O_B(t)\rangle - \langle O_B(0)\rangle= \bra{\psi}U_A^{\dagger}(-t)[O_B, U_A(-t) ] \ket{\psi}. \nonumber
\end{equation}

Let us note two things: the expectation value of $U_A^{\dagger}(-t)[O_B,U_A(-t)]$ is bounded by $\|U_A^{\dagger}(-t)[O_B,U_A(-t)]\|$; and the left multiplication by $U_A^{\dagger}(-t)$ does not increase the norm. Therefore,
\begin{equation}
|\langle O_B(t)\rangle - \langle O_B(0) \rangle| \leq \|[O_B, U_A(-t)] \|. \nonumber
\end{equation}
Of course, the disorder average of the left hand side is bounded by the disorder average of the right hand side. Note that the commutator in the right hand side can be precisely related to the commutator that appears in Theorem 1. Hence, we conclude that
\begin{equation}
\mathbb{E}_{\mu} |\langle O_B(t)\rangle - \langle O_B(0) \rangle| \leq ct e^{-\frac{x}{2\xi}}, \label{eq:absence_transport}
\end{equation}
where $x$ is the distance between $A$ and $B$. In order to observe any disturbance in $A$, one needs to wait for at least  $t\approx e^{\frac{x}{2\xi}}$.

As we have stated before, our result suggests that any particles that can carry a conserved quantity must exhibit a subdiffusive behavior. This observation suggests that the conductivity must vanish for these systems. While we find such conclusion to be very likely, we are not aware of a proof that makes this intuition rigorous. We leave this as an open problem.

\subsection{Efficient simulation of the dynamics}
In quenched dynamics, time-dependent expectation value of a local observable is a frequnetly studied quantity. A related fundamental question lies on the computational cost for estimating such a quantity up to a desired precision.  For generic spin chains with local interactions, Lieb-Robinson bound has been used to estimate such a cost. The idea behind such analysis lies on the fact that there is an effective light cone so that the events outside the light cone can be ignored by sacrificing a modest amount of precision.  Some of the known algorithms exploiting this fact scale exponentially in $t$.\cite{Osborne2006,Hastings2008} For the MBL phase, we may expect to do better, since the effective speed of light is $0$.

Indeed, we show that there is an exponential speedup; typical instances of the disorder realizations can be simulated in a time that scales polynomially in $t$. The strategy is deceptively simple; approximate the time evolution as $e^{iH't}$, where $H'$ only contains terms that are nearby the support of $O$. This \emph{local propagator} acts very differently from the original one on the global state, but it preserves the time-dependent expectation value of $O$ approximately, as we explain below.

The fundamental relation that we are attempting to show can be summarized as follows:
\begin{equation}
\langle O(t)\rangle - \langle O'(t)\rangle \approx 0,\label{eq:approximation}
\end{equation}
where $O(t)=e^{iHt}Oe^{-iHt}$ and $O'(t)=e^{iH't} O e^{-iH't}$. Assuming one can efficiently obtain a reduced density matrix of a small-sized region, the above relation allows us to compute the time-dependent observable efficiently. If $H'$ acts on $n$ sites, $\bra{\psi}e^{iH't} O e^{-iH't} \ket{\psi}$ is equal to $\Tr(\rho e^{iH't} O e^{-iH't})$, where $\rho$ is the reduced density matrix over the support of $H'$. Such expectation value can be computed in a time that scales exponentially in $n$, but $n$ can be a number that is independent of the system size. Therefore, for a small value of $n$, one can expect to drastically reduce the computational cost. Intuitively, $\xi$ can be thought as an upper bound to the localization length, since the characteristic lengthscale at which the integrals of motion are spread out should be smaller than $\xi$.

Replacing the original propagator to the local propagator leads to the aforementioned computational advantage, but how much accuracy are we sacrificing? We show that not much is lost in this approximation; we obtain the following estimate on $n$ that is sufficient to approximate $\bra{\psi}e^{iHt} Oe^{-iHt} \ket{\psi}$ up to an accuracy of $\epsilon$:
\begin{equation}
n \approx \xi \log \frac{t^2}{\epsilon}. \label{eq:comp_cost}
\end{equation}
Such choice of $n$ amounts to a computational cost that scales as $\sim (\frac{t^2}{\epsilon})^{O(\xi)}$.

Eq.(\ref{eq:comp_cost}) follows from the same argument which was used in showing the absence of transport. While the algebraic manipulation is a bit more elaborate, we emphasize that there is a clear purpose behind it; namely, we are attempting to relate the left hand side of Eq.(\ref{eq:approximation}) to the commutator in Theorem 1.

There are three key logical steps behind this derivation. First, note that the difference between $\langle O(t)\rangle$ and $\langle O'(t)\rangle$ can be bounded by a more conservative estimate: the largest eigenvalue of $e^{iHt}Oe^{-iHt} - e^{iH't} O e^{-iH't}$. This is based on a simple observation that the difference can be recast as the expectation value of $e^{iHt}Oe^{-iHt} - e^{iH't} O e^{-iH't}$ for a state $\ket{\psi}$. By the definition of the norm, this is bounded by $\| e^{iHt}Oe^{-iHt} - e^{iH't} O e^{-iH't} \|$.

In the second step, we have the following sequence of identities:
\begin{widetext}
\begin{align}
\| e^{iHt}Oe^{-iHt} - e^{iH't} O e^{-iH't}\| &= \|U(t)^{\dagger}OU(t) - O \| \nonumber \\
&= \|\int^{t}_{0}\frac{d}{dt'} (U(t')^{\dagger}OU(t')) dt' \| \nonumber \\
&=\| \int^{t}_{0} iU(t')^{\dagger} [H''(t),O]U(t') dt' \|,
\end{align}
\end{widetext}
where $U(t)=e^{-iHt}e^{iH't}$ is some unitary operator and $H''= H - H'$ is a Hamiltonian whose support is separated from the support of $O$ by a distance of $\sim \frac{n}{2}$.\footnote{The factor of $\frac{1}{2}$ comes from the fact that roughly half of the $n$ sites lie on the left side of $O$ and the other half lie on the right side of $O$. Therefore, the distance between $O$ and the support of $H''$ is roughly $\frac{n}{2}$.} In the first line, we used the fact that the eigenvalue spectrum does not change under a conjugation by a unitary operator. In particular, the largest eigenvalue remains the same. In the second line, we have rewritten the expression using an auxiliary variable $t'$. The last line follows by simply calculating the derivative of $U(t')$. The main purpose of the second step was to obtain a commutator $[H''(t), O]$, whose norm can be bounded from Theorem 1.

Indeed, as the last step, one can use the triangle inequality to show that the last line is bounded by $\int^{t}_0\|[H''(t),O]\| dt' $. Taking the expectation value, we obtain the following universal bound:
\begin{equation}
\mathbb{E}_{\mu}|\langle O(t)\rangle - \langle O'(t)\rangle| \leq \int^{t}_{0}  \mathbb{E}_{\mu}\|[H''(t'),O]\| dt'.
\end{equation}
Applying Theorem 1 to each of the local terms in $H''$, and performing the integral, we conclude that
\begin{equation}
\mathbb{E}_{\mu}|\langle O(t)\rangle - \langle O'(t)\rangle| \leq ct^2 e^{-\frac{n}{4\xi}}.\label{eq:efficient_simulation_errorbound}
\end{equation}
In order to obtain an accuracy of $\epsilon$ for any local operator $O$, it suffices to choose $n$ as $4\xi \log(\frac{ct^2}{\epsilon})$.

\subsection{Slow growth of entanglement entropy}
One of the first clear signatures of the MBL phase was the slow growth of the entanglement entropy under a quantum quench. For example, average entanglement entropy across a cut in the middle of a spin chain was studied in Ref.[\onlinecite{Prosen2008,Bardarson2012,Serbyn2013-1,Calabrese2005,Calabrese2006,Kim2013a}]. A logarithmic growth in time was observed in the MBL phase, while a linear growth was observed in the delocalized phase. Here we derive a bound that closely matches the behavior of the MBL phase, assuming the system is localized at some finite lengthscale:
\begin{equation}
\mathbb{E}_{\mu} S(t) \leq c\xi \log t + o(1),
\end{equation}
where $\mathbb{E}_{\mu}S(t)$ is the disorder-averaged entanglement entropy at time $t$ and $o(1)$ is a term that approaches $0$ in the $t\to \infty$ limit. The initial state is assumed to be a product state. The derivation is admittedly more involved, as it is based on two ideas that have received little attention in the studies of localization so far. Instead of explaining these ideas right away, we would like to begin by explaining why such tools are necessary.

Recall that, at an abstract level, we are attempting to bound the growth of the entanglement entropy of some finite region under a unitary dynamics generated by a sum of local Hamiltonians, e.g., Eq.(\ref{eq:AndersonModel}). The solution to this problem for \emph{general} Hamiltonians is well-known; the rate at which entanglement entropy increases is bounded as follows:
\begin{equation}
\frac{dS}{dt} \leq c\sum_{i \in \partial}\| h_{i}\| \log d_i, \label{eq:entangling_rate}
\end{equation}
where the sum is taken over the terms that act across the boundary between the region and its complement and $d_i$ is the dimension of the support of $h_i$.\cite{Acoleyen2013} This bound is an upper bound on the \emph{entangling rate}: the rate at which entanglement entropy increases. This upper bound, applied to our systems of interest, implies that the entanglement entropy of a subsystem $A$ in a time interval of  $t$ can grow at most by $\sim c|\partial A|t$. This is known to be optimal in a sense that for some $c<c_0$, there exists some Hamiltonian such that its entangling rate exceeds the bound. It should be also noted that such a linearly growing upper bound is obtained using other techniques as well; see Ref.[\onlinecite{Eisert2006}].

We will end up using the upper bound on the entangling rate, but one thing is clear from the above analysis: that na\"ively applying it to our systems of interest cannot yield the desired logarithmically increasing bound. Indeed this was to be expected, since the upper bound on the entangling rate is applicable to both ergodic and localized systems. How then can we use Theorem 1 to attain a more refined upper bound?

We can do that by considering a particular decomposition of the propagator; our approach is to decompose the propagator $U(t)=e^{iHt}$ as follows:
\begin{equation}
U(t) = U_{in}(t) U_{out}(t) U_{\delta}(t) U_{\partial}(t),\label{eq:decomposition}
\end{equation}
where $U_{in}(t)$ is a unitary operator that acts only inside a given subsystem, $U_{out}(t)$ is  a unitary operator that acts only outside the given subsystem, and $U_{\partial}(t)$ is a unitary operator that acts in the vicnity of the boundary between the left and the right part. Once these operators are specified, $U_{\delta}(t)$ shall be defined implicitly from them. It is important to note that each of these unitary operators play a different role in creating entanglement. Neither $U_{in}(t)$ nor $U_{out}(t)$  can create any entanglement, since they are only able to unitarily rotate the state locally. As far as the entaglement is concerned, $U_{\delta}(t)$ and $U_{\partial}(t)$ are the only relevant operators. Our goal is to argue that the \emph{entangling power} of these unitary operators are not so large; they cannot create too much entanglement.

There are simple reasons behind why these operators($U_{\delta}(t)$ and $U_{\partial}(t)$) can only create a small amount of entanglement. The case for the $U_{\partial}(t)$ is particularly simple, as it only acts nontrivially on the finite number of particles that are nearby the boundary between the left and the right part. Thus the amount of entanglement that can be created by this operator is bounded by some constant. The remaining operator, $U_{\delta}(t)$ acts nonlocally, but we will be able to exploit Theorem 1 to show that it can be generated by a sum of quasi-local terms whose strength is sufficiently small. A possible increase in the entanglement entropy that is due to $U_{\delta}(t)$ will be bounded by exploiting the bound on the entangling rate.

In the above analysis, we have only mentioned the roles of the unitary operators that appear in Eq.(\ref{eq:decomposition}). Now it is time to specify them, for a model that can be described by the following Hamiltonian:
\begin{equation}
H= H_{in} + H_{out} + H_{\partial},
\end{equation}
where $H_{in}$($H_{out}$) consists of terms that act  only on the inside(outside) of a given subsystem and $H_{\partial}$ consists of terms that act nontrivially on both the left and the right part. For example, for a one-dimensional system with a cut in the middle, $H_{\partial}$ consists of a finite number of terms. Without loss of generality, we shall define $H_{\partial}$ as follows:
\begin{equation}
H_{\partial} = \sum_{n \in \partial} h_{n}, \nonumber
\end{equation}
where the summation is taken over the local terms in the Hamiltonian that act nontrivially both on the inside and the outside of the given region.

We choose the unitary operators in Eq.\ref{eq:decomposition} in such a way that they satisfy the following differential equations.
\begin{align}
&\frac{d}{dt}U_{in}(t) = iH_{in} U_{in}(t) \nonumber\\
&\frac{d}{dt}U_{out}(t) = iH_{out} U_{out}(t) \nonumber\\
&\frac{d}{dt}U_{\partial}(t) = i \sum_{n\in \partial} [h_{n}(t)]_r U_{\partial}(t),
\end{align}
where $h_{n}(t) = e^{iHt} h_{n}e^{-iHt}$ and $[h_{n}(t)]_r$ is the restriction of $h_{n}(t)$ to a set of sites which are distance $r$ or less away from the support of $h_{n}$, which is defined below.
\begin{equation}
[O]_r = \frac{I_{\overbar{A(r)}}}{d_{\overbar{A(r)}}}\otimes\Tr_{\overbar{A(r)}} O, \nonumber
\end{equation}
where $A(r)$ is a set of sites which are distance $r$ or less away from the support of $O$ and $\overbar{A(r)}$ is its complement. This region appears in the subscripts of the symbols which represent the dimension($d$) and the identity operator acting on those sites($I$), as well as the partial trace($\Tr$) over the region. It is worth noting that $[O]_r$ acts nontrivially only on $A(r)$, and approaches $O$ as $r$ approaches infinity. It should be also noted that $r$ is a parameter that shall be optimized to obtain the best bound.

Beginning from a product state, the entanglement entropy at time $t$, which is denoted as $S(t)$, must obey the following upper bound:
\begin{widetext}
\begin{equation}
 S(t) \leq cr + c'\sum_{n\in\partial}\sum_{r'=r}^{\infty} \int^t_0 (r'+r+1)\| [h_{n}(t')]_{r'+1} - [h_{n}(t')]_{r'}\| dt'.\label{eq:entropy_upperbound}
\end{equation}
\end{widetext}
The above inequality follows from two observations. First, the difference between the entanglement entropy of the initial state $\ket{\psi}$ and $U_{\partial}(t)\ket{\psi}$ is bounded by a number of sites for which $U_{\partial}(t)$ acts nontrivially; this contribution corresponds to the first term in the upper bound. Second, the difference between the entanglement entropy of $U_{\partial}(t)\ket{\psi}$ and $U_{\delta}(t)U_{\partial}(t)\ket{\psi}$ can be bounded from the bound on the entangling rate. To see this, let us write down the differential equation that governs the evolution of $U_{\delta}(t)$:
\begin{equation}
\frac{d}{dt}U_{\delta}(t) = iU_{\delta}(t) \sum_{n\in \partial}U_{\partial}(t) (h_{n}(t) - [h_{n}(t)]_r)U_{\partial}(t)^{-1}.
\end{equation}
The ``Hamiltonian'' that generates this flow is $\sum_{n\in \partial}U_{\partial}(t) (h_{n}(t) - [h_{n}(t)]_r)U_{\partial}(t)^{-1}$, which can be expanded as follows:
\begin{equation}
\sum_{n\in \partial}\sum_{r'=r}^{\infty}U_{\partial}(t) ([h_{n}(t)]_{r'+1} - [h_{n}(t)]_{r'})U_{\partial}(t)^{-1}.
\end{equation}
For each $r'$ and $n$, we can choose $U_{\partial}(t) ([h_{n}(t)]_{r'+1} - [h_{n}(t)]_{r'})U_{\partial}(t)^{-1}$ to be the $h_i$ which appears in Eq.\ref{eq:entangling_rate}. The operator norm of this term is identical to $[h_{\partial}(t)]_{r'+1} - [h_{\partial}(t)]_{r'}$, since the norm is invariant under a unitary rotation. The size of their support is bounded by $2(r'+r+1)$, since (i) $[h_{\partial}(t)]_{r'+1} - [h_{\partial}(t)]_{r'}$ is supported on at most $2(r'+1)$ sites and (ii) $U_{\partial}(t)$ is supported on a set of sites that are distance $r$ or less away from the boundary. Applying these observations to Eq.\ref{eq:entangling_rate}, one can obtain the second term that appears in the upper bound of Eq.\ref{eq:entropy_upperbound}.

By performing the disorder average on Eq.\ref{eq:entropy_upperbound}, we obtain the following bound:
\begin{widetext}
\begin{equation}
 \mathbb{E}_{\mu}S(t) \leq cr + c' \sum_{n\in \partial}\sum_{r'=r}^{\infty} \int^t_0 (r'+r+1) \mathbb{E}_{\mu}\| [h_{n}(t')]_{r'+1} - [h_{n}(t')]_{r'}\| dt'.\label{eq:entropy_upperbound_average}
\end{equation}
\end{widetext}
The following lemma provides an upper bound on $\mathbb{E}_{\mu}\| [h_{n}(t')]_{r'+1} - [h_{n}(t')]_{r'}\|$:
\begin{lem}
\begin{equation}
\mathbb{E}_{\mu}\| [h_{n}(t')]_{r'+1} - [h_{n}(t')]_{r'}\|  \leq ct'e^{-r'/\xi}
\end{equation}
\end{lem}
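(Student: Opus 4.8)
The plan is to reduce the left-hand side to a commutator of $h_n(t')$ with a unitary supported far from $\mathrm{supp}(h_n)$, and then invoke Theorem 1. The only property of the truncation $O\mapsto[O]_r$ that I will use is that it is the conditional expectation onto operators supported on $A(r)$, the $r$-neighbourhood of the \emph{unevolved} support of $h_n$; equivalently, $[O]_r=\int dU\, U O U^{\dagger}$ is the Haar twirl over unitaries $U$ acting on the complement $\overbar{A(r)}$ (tensored with the identity on $A(r)$). Consequently $O-[O]_r=\int dU\,[O,U]U^{\dagger}$, the map is a contraction in operator norm, and its ranges nest as $r$ increases.

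First I would write, using the triangle inequality, the nesting $\overbar{A(r'+1)}\subseteq\overbar{A(r')}$, and the twirl identity,
\begin{equation}
\|[h_n(t')]_{r'+1}-[h_n(t')]_{r'}\|\ \le\ 2\max_{r\in\{r',\,r'+1\}}\|h_n(t')-[h_n(t')]_r\|\ \le\ 2\int dU\,\|[h_n(t'),U]\|,
\end{equation}
with $U$ ranging over unitaries on $\overbar{A(r')}$. Taking the disorder average and using that $dU$ is a probability measure,
\begin{equation}
\mathbb{E}_{\mu}\|[h_n(t')]_{r'+1}-[h_n(t')]_{r'}\|\ \le\ 2\,\sup_{U\ \mathrm{on}\ \overbar{A(r')}}\ \mathbb{E}_{\mu}\|[h_n(t'),U]\|.
\end{equation}
It is essential here to pull $\mathbb{E}_\mu$ inside the twirl integral (legitimate since the Hilbert space is finite-dimensional) \emph{before} passing to the supremum, so that one only ever controls $\sup_U\mathbb{E}_\mu$ rather than $\mathbb{E}_\mu\sup_U$.

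Second, every such $U$ is supported at distance at least $r'+1$ from $\mathrm{supp}(h_n)$, so $[h_n(t'),U]$ is exactly the object bounded by Theorem 1, taken with $O_A=h_n$ (whence $|\partial A|=|\partial\,\mathrm{supp}(h_n)|$ and $\|h_n\|$ are $O(1)$ constants, absorbed into $c$), $O_B=U$, and $x\ge r'+1$. Theorem 1 gives $\mathbb{E}_{\mu}\|[h_n(t'),U]\|\le c\,|t'|\,e^{-(r'+1)/2\xi}$, and substituting into the previous display proves the Lemma. (Strictly, this route yields the rate $e^{-r'/2\xi}$; since ``localized at $\xi$'' implies ``localized at $2\xi$'', this already has the stated form after reinterpreting the length, and in any event the factor of $2$ in the exponent does not affect the logarithmic bound $\mathbb{E}_\mu S(t)\le c\xi\log t$ that this Lemma feeds into.)

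The twirl identity is routine; the only points that need care are (i) keeping ``$\mathrm{supp}$'' in the definition of $[\,\cdot\,]_r$ anchored to the unevolved $h_n$, so that $A(r')$ is a fixed finite region, and (ii) the $\mathbb{E}_\mu$/$\sup_U$ ordering above. If one prefers to avoid Theorem 1 as a black box, one can argue directly: the mutual commutativity $[\tilde h_i,\tilde h_j]=0$ lets one factor $h_n(t')=\big(\prod_j e^{it'\,\mathrm{ad}_{\tilde h_j}}\big)h_n$ with the factors in any order, then discard the factors with $j$ far from $\mathrm{supp}(h_n)$ --- the error is $\le c|t'|e^{-\rho/\xi}$ by Definition 1, using the conservation identity $[\tilde h_j,h_n(s)]=e^{iHs}[\tilde h_j,h_n]e^{-iHs}$ so that the operator inside the commutator is always just $h_n$ --- and finally truncate each surviving $\tilde h_j$ to radius $\sim r'-\rho$ (again by Definition 1); optimizing $\rho\approx r'/2$ reproduces the bound. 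I expect the main obstacle in this Theorem-1-free version to be the same double-approximation bookkeeping (splitting the radius budget $r'$ between ``how far the relevant $\tilde h_j$ reach'' and ``how much of each $\tilde h_j$ is retained'') that underlies the proof of Theorem 1 itself.
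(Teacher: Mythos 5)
Your proof is correct and follows essentially the same route as the paper's: represent $[\,\cdot\,]_{r}$ as a Haar twirl over unitaries on the complement of $A(r)$, bound $\|h_n(t')-[h_n(t')]_r\|$ by $\sup_U\mathbb{E}_\mu\|[h_n(t'),U]\|$ via Theorem 1, and compare the two truncation radii by the triangle inequality through $h_n(t')$. Your parenthetical about the exponent $e^{-r'/2\xi}$ versus the stated $e^{-r'/\xi}$ is a fair catch --- the paper's own proof silently absorbs the same factor of two --- and, as you note, it is immaterial for the $c\xi\log t$ entropy bound.
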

\begin{proof}
Note that
\begin{equation}
[h_{n}(t')]_{r'} = \int d\mu(U) U h_{n}(t') U^{\dagger}, \nonumber
\end{equation}
where $\mu(U)$ is a Haar measure over a set of unitary operators acting on the complement of the support of $[h_{n}(t')]_{r'}$.\cite{Bravyi2006} For such $U$,
\begin{equation}
\mathbb{E}_{\mu}\| Uh_n(t')U^{\dagger} - h_n(t')\|=\mathbb{E}_{\mu}\| [U, h_{n}(t')\| \leq ct' e^{-r'/\xi}. \nonumber
\end{equation}
Integrating over $\mu(U)$ and using the triangle inequality, we conclude that
\begin{equation}
\mathbb{E}_{\mu}\|[h_n(t')]_{r'}  - h_n(t')\| \leq ct' e^{-r'/\xi}. \nonumber
\end{equation}
A similar inequality can be derived for $[h_n(t')]_{r'+1}$ as well, after which one can bound the closeness between $[h_n(t')]_{r'+1}$ and $[h_n(t')]_{r'}$.
\end{proof}
The above Lemma immediately implies that the disorder-averaged entanglement entropy obeys the following universal bound:
\begin{equation}
\mathbb{E}_{\mu}S(t) \leq cr +c'rt^2e^{-r/\xi}. \nonumber
\end{equation}
By choosing $r=O(\xi \log t)$
\begin{equation}
\mathbb{E}_{\mu}S(t) \leq c\xi \log t + o(1),
\end{equation}
where $c$ is some constant and $o(1)$ is a term that vanishes in $t\to \infty$ limit. Of course, for short times, one can simply use the bound on the entangling rate, which implies that
\begin{equation}
\mathbb{E}_{\mu}S(t)\leq c t.
\end{equation}
Since both bounds are applicable to our system of interest, one can simply take the minimum, which is achieved by the linearly increasing bound for short times and logarithmically increasing bound for long times.

\section{Discussion}
Starting from a generic decay property of the integrals of motion defined in Ref.[\onlinecite{Chandran2014}], we were able to derive a number of qualitative features of MBL. We emphasize once more that our conclusion is independent of the details of the model, so long as the system is localized at a finite lengthscale, a notion that we made precise in Definition 1.

We emphasize that our bound on information propagation, i.e., Theorem 1, is valid in any dimensions. Since the implications that were discussed in the context of one-dimensional systems were primarily based on Theorem 1, we believe that most of the arguments can be straightforwardly generalized. In particular, Eq.\ref{eq:absence_transport} can be proved without modifying the current proof. We believe these general facts warrant a further study on these integrals of motion.

Clearly, an important open question at this point is whether one can prove that a sufficiently strongly disordered system is localized at a finite lengthscale. Another interesting open question is whether the integrals of motion defined in this paper remain local in systems with a robust gapless edge mode. By definition, the integrals of motion cannot be localized everywhere; it would imply that the system obeys a bound like Eq.(\ref{eq:ZVLR}), which would be inconsistent with the (thermal) transport along the edge. One logical possibility is that the integrals of motion are localized in the bulk, but not near the edge. It will also be interesting to bound the entanglement entropy under a quantum quench in higher-dimensional systems. We conjecture that entanglement grows as $|\partial A| \log t$ at large $t$ in the MBL phase, where $|\partial A|$ is the area of the entanglement cut.

In the context of studying the quenched dynamics of a microscopic model, Eq.(\ref{eq:efficient_simulation_errorbound}) suggests that most of the instances of a disordered system can be efficiently simulated. This does not directly imply that we can unconditionally simulate such systems efficiently, since we do not have an efficient method to verify that the system satisfies Definition 1. However, there might be an efficiently checkable condition, under which a truncation like Eq.(\ref{eq:comp_cost}) can be justified. We leave that as an open problem.

The approach that we have taken in this paper is primarily motivated by MBL. However, we believe the approach will prove useful in studying the dynamics of other types of  interacting quantum many-body systems as well, for the reasons that we explain below. The only postulate that we have imposed on our system is the average spatial decay property of the integrals of motion which are canonically defined for any locally interacting Hamiltonian; see Definition 1. We expect to observe a different decay behavior for other systems, which would imply a different conclusion. For example, some systems might have a gapless edge mode, and hence exhibit a different decay behavior near the boundary. Some systems might support ballistic or diffusive transport, which would again imply a different decay behavior. We leave these studies for future work.

Lastly, we note that our work was largely inspired by the fractional moment method for studying Anderson localization, which uses the average decay property of the fractional moment of the Green's function to prove dynamical localization.\cite{Aizenman1993} The role of the Green's function is taken by the integrals of motion in our work, and its strength was numerically shown to be decaying exponentially on average. Similar to the fractional moment method, the average decay property of the integrals of motion can be exploited to prove statements that are reminiscent to the dynamical localization. This connection remains as an analogy in this primordial form, but it would be interesting to understand if there are any insights from the fractional moment method that can prove useful in our approach.

\bibliography{bib}

\appendix
\section{Derivation of Theorem 1}
Theorem 1 follows from Lemma 2, which is stated below. Let us first begin by defining some notations.
We denote $A(r)$ as a set of sites which are distance $r$ or less away from $A$. We define $A(r)^c$ as a complement of $A(r)$, i.e., a set of sites whose distance from $A$ are larger than $r$. Note that, without loss of generality, the Hamiltonian can be written as follows:
\begin{equation}
H=\tilde{H}_{A(r)} + \tilde{H}_{A(r)^c},
\end{equation}
where
\begin{equation}
\tilde{H}_{A(r)}=\sum_{n\in A(r)} \tilde{h}_j,
\end{equation}
and the other term is defined similarly.

\begin{widetext}
\begin{lem}
\begin{equation}
\|[O_A(t),O_B] \| \leq  2t(\|O_B \|\|[O_A, \tilde{H}_{A(r)^c}] \| + \| O_A \| \|[O_B,\tilde{H}_{A(r)}] \|)
\end{equation}
\end{lem}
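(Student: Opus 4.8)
The plan is to exploit the one algebraic fact that distinguishes the integrals of motion $\tilde{h}_j$ from generic local terms: they mutually commute, so that the time evolution factorizes \emph{exactly} across the partition $\{A(r),A(r)^c\}$, with no Lieb--Robinson recursion required. Since $H=\tilde{H}_{A(r)}+\tilde{H}_{A(r)^c}$ and $[\tilde{H}_{A(r)},\tilde{H}_{A(r)^c}]=0$ (because $[\tilde{h}_n,\tilde{h}_m]=0$ for all $n,m$), one has $e^{iHt}=V(t)W(t)=W(t)V(t)$ with $V(t)=e^{i\tilde{H}_{A(r)}t}$ and $W(t)=e^{i\tilde{H}_{A(r)^c}t}$ both unitary. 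The one estimate I will invoke twice is the elementary Duhamel bound: for Hermitian $G$, $e^{iGt}Xe^{-iGt}-X=\int_0^t i\,e^{iGs}[G,X]e^{-iGs}\,ds$, so that $\|e^{iGt}Xe^{-iGt}-X\|\le t\,\|[G,X]\|$.

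First I would compare $O_A(t)$ with $V(t)O_AV(t)^\dagger$. From $e^{iHt}=V(t)W(t)$ we get $O_A(t)-V(t)O_AV(t)^\dagger=V(t)\bigl(W(t)O_AW(t)^\dagger-O_A\bigr)V(t)^\dagger$, and the Duhamel bound with $G=\tilde{H}_{A(r)^c}$ together with unitary invariance of the operator norm gives $\|O_A(t)-V(t)O_AV(t)^\dagger\|\le t\,\|[O_A,\tilde{H}_{A(r)^c}]\|$. I emphasize this is \emph{not} a statement about operator supports --- $V(t)O_AV(t)^\dagger$ need not be contained in $A(r)$, since the $\tilde{h}_n$ are only quasi-local --- but purely the norm estimate above.

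Then I split the commutator, $[O_A(t),O_B]=\bigl[\,O_A(t)-V(t)O_AV(t)^\dagger,\;O_B\,\bigr]+\bigl[\,V(t)O_AV(t)^\dagger,\;O_B\,\bigr]$. The first term has norm at most $2\|O_B\|\,\|O_A(t)-V(t)O_AV(t)^\dagger\|\le 2t\,\|O_B\|\,\|[O_A,\tilde{H}_{A(r)^c}]\|$. For the second, unitary invariance gives $\|[V(t)O_AV(t)^\dagger,O_B]\|=\|[O_A,V(t)^\dagger O_B V(t)]\|$; applying the Duhamel bound with $G=\tilde{H}_{A(r)}$ to $O_B$ yields $\|V(t)^\dagger O_B V(t)-O_B\|\le t\,\|[O_B,\tilde{H}_{A(r)}]\|$, and since $A$ and $B$ are disjoint we have $[O_A,O_B]=0$, hence $\|[O_A,V(t)^\dagger O_B V(t)]\|=\|[O_A,\,V(t)^\dagger O_B V(t)-O_B]\|\le 2t\,\|O_A\|\,\|[O_B,\tilde{H}_{A(r)}]\|$. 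Summing the two contributions gives exactly the claimed inequality. Theorem~1 then follows by taking the disorder average and bounding $\mathbb{E}_{\mu}\|[O_A,\tilde{H}_{A(r)^c}]\|$ and $\mathbb{E}_{\mu}\|[O_B,\tilde{H}_{A(r)}]\|$ by geometric sums over sites of the single-term estimates of Definition~1, then optimizing the free parameter at $r\approx x/2$.

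There is no genuinely hard step: the entire content is that mutual commutativity of the $\tilde{h}_j$ collapses the evolution into a product of two commuting unitaries, after which only the triangle inequality for commutators and two Duhamel estimates are needed. The only point requiring a little care --- and the place one could go wrong --- is the temptation to argue via operator supports, which fails precisely because the integrals of motion are quasi-local rather than strictly local; one must instead compare conjugated operators to the originals in norm, as above.
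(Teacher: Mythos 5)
Your proof is correct, and it reaches the lemma by a genuinely different route than the paper. The paper follows the template of the standard Lieb--Robinson derivation: it sets $f(t)=[O_A(t),O_B]$, derives the differential equation $f'(t)=-i[f(t),\tilde{H}_{A(r)}]+\delta(t)$ (using that $\tilde{H}_{A(r)}$ is conserved, so $e^{iHt}[\tilde{H}_{A(r)},O_A]e^{-iHt}=[\tilde{H}_{A(r)},O_A(t)]$, plus the Jacobi identity), observes that the homogeneous part is norm-preserving, and integrates $\|\delta(t')\|$ with $f(0)=0$. You instead use the mutual commutativity in its sharpest form, $[\tilde{H}_{A(r)},\tilde{H}_{A(r)^c}]=0$, to factorize the propagator exactly as $e^{iHt}=V(t)W(t)$ and then apply two Duhamel estimates together with the vanishing of $[O_A,O_B]$. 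Both arguments rest on the same algebraic input ($[\tilde{h}_i,\tilde{h}_j]=0$) and produce the identical constant, but yours avoids the differential inequality and the norm-preservation step entirely, making the mechanism --- that commutativity collapses the Lieb--Robinson recursion --- more transparent; the paper's version has the advantage of sitting closer to the standard Lieb--Robinson machinery it is meant to generalize, which is the point of the remark in Section~III that ``the recursion halts at the first level.'' Your closing observation that one must compare conjugated operators in norm rather than argue via supports is exactly right, and your sketch of how Theorem~1 follows (disorder average, geometric sums from Definition~1, $r\approx x/2$) matches the paper's appendix.
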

\end{widetext}
\begin{proof}
First, define $f(t)= [O_A(t),O_B]$.
\begin{align}
f'(t)&=i[e^{iHt}[H,O_A]e^{-iHt},O_B] \nonumber \\
&=i[e^{iHt}[\tilde{H}_{A(r)} + \tilde{H}_{A(r)^c}, O_A]e^{-iHt},O_B] \nonumber \\
&=g_{A(r)}(t) + g_{A(r)^c}(t),
\end{align}
where
\begin{align}
g_{A(r)}(t)&=i[e^{iHt}[\tilde{H}_{A(r)} , O_A]e^{-iHt},O_B] \nonumber \\
g_{A(r)^c}(t)&=i[e^{iHt}[\tilde{H}_{A(r)^c} , O_A]e^{-iHt},O_B].
\end{align}
Further,
\begin{align}
g_{A(r)}(t)&=i[[\tilde{H}_{A(r)}, O_A(t)],O_B] \nonumber \\
&=-i[f(t),\tilde{H}_{A(r)}] -i [[O_B,\tilde{H}_{A(r)}],O_A(t)].
\end{align}
Combining these identities together, $f(t)$ satisfies the following differential equation:
\begin{equation}
f'(t)=-i[f(t),\tilde{H}_{A(r)}]+ \delta(t),\label{eq:DifferentialEquation}
\end{equation}
where
\begin{equation}
\delta(t)= i[e^{iHt}[\tilde{H}_{A(r)^c} , O_A]e^{-iHt},O_B]-i [[O_B,\tilde{H}_{A(r)}],O_A(t)].
\end{equation}
Since the first term in Eq.\ref{eq:DifferentialEquation} is norm-preserving,
\begin{equation}
\| f(t) \| \leq \| f(0)\| + \int^t_0 \| \delta(t') \| dt'.
\end{equation}
Using the unitarity of $e^{iHt}$ and triangle inequality, $\|\delta(t')\|$ is uniformly bounded as follows:
\begin{equation}
\|\delta(t')\| \leq 2(\|O_B \|\|[O_A, \tilde{H}_{A(r)^c}] \| + \| O_A \| \|[O_B,\tilde{H}_{A(r)}] \|).
\end{equation}
Integrating out $t'$, the lemma is proved.
\end{proof}

Now that we have Lemma 1, the derivation of Theorem 1 is straightforward. One can simply take the expectation value over the disorder realizations, and choose $r=\frac{x}{2}$. Since the distance between $A(\frac{x}{2})$ and $B$ as well as the distance between $A(\frac{x}{2})^c$ and $A$ are at least $\frac{x}{2}$, the relevant commutators decay exponentially in $x$. Theorem 1 follows by summing all the contributions.

\end{document}